\newcommand{\f}{{\mathbb  F}}
\newcommand{\C}{{\mathcal C}}
\newcommand{\wt}{{\mathrm{wt}}}
\newcommand{\tr}{{\mathrm{Tr}}}
\newtheorem{theorem}{Theorem}
\newtheorem{lemma}[theorem]{Lemma}
\newtheorem{definition}[theorem]{Definition}
\newtheorem{example}{Example}
\journal{}
\begin{document}

\begin{frontmatter}


\title{ A family of linear codes with three weights\tnoteref{label1} }
\tnotetext[label1]{This work is
supported by a National Key Basic Research Project of China (2011CB302400), National Science Foundation of China (61379139)
and  the ``Strategic Priority Research Program" of the Chinese Academy of Sciences, Grant No. XDA06010701.\\
$ ^{*}\ $Corresponding author.\\ E-mail addresses: yanyang9021@iie.ac.cn(Y. Yan), cczxlf@163.com(F. Li),  wangyan198801@163.com (Q. Wang).}
\author{\small Yang Yan$^{a}$, Fei Li$^{b,*}$, Qiuyan Wang$^{c}$\\
$^a$ National Engineering Laboratory for Information Security Technologies
,  \\ Institute of Information Engineering, Chinese Academy of Sciences, \\
Beijing 100195,  China\\
$^b$  School of Statistics and Applied Mathematics, Anhui University of Finance\\ and Economics,
 Bengbu City,  Anhui Province, {\rm 233000}, China\\
$^c$ School of Computer Science and Software Engineering,  Tianjin Polytechnic \\ University, Tianjin 300387, China}

\begin{abstract}
Recently, linear codes constructed by defining sets have attracted a lot of study, and many optimal linear codes
with a few weights have been produced. The objective of this paper is to present a class of binary linear codes
with three weights. 
\end{abstract}

\begin{keyword}
linear codes, weight distributions, exponential sums

\end{keyword}

\end{frontmatter}

\section{Introduction}
\label{}
Throughout this paper,  $q=2^{m}$ for a positive integer $m=2e$, and $h$ denotes  a proper divisor of $m$.  Let $\f_{q}$ and $\f_{q}^{*}$ denote the finite field with  $q$ elements and the multiplicative group of $\f_{q}$, respectively. Let $\f_{2}^{n}$ denote the vector space of all $n-$tuples over $\f_2$.

For $\mathbf{x}\in \f_2^{n}$, the (Hamming) weight $\wt(\mathbf{x})$  is referred to the number of nonzero coordinate in $\mathbf{x}$. The (Hamming) distance $d(\mathbf{x},\mathbf{y})$ between vectors $\mathbf{x},\mathbf{y}\in \f_2^{n}$ is defined to be the weight $\wt(\mathbf{x}-\mathbf{y})$ of $\mathbf{x}-\mathbf{y}\in \f_2^{n}$. The minimum distance of $\C$ is  the least Hamming distance between two distinct codewords. If $\C$ is a $k-$dimensional subspace of $\f_2^{n}$ with minimum distance $d$, then  $\C$ will be called an $[n,k,d]$ linear code.

Let $A_{i}$, also denoted by $A_{i}(\C)$, indicate the number of codewords of weight $i$ in $\C$. The list $A_{i}$ ($0\leq i\leq n$) is called the weight distribution of $\C$. This distribution is usually recorded as the coefficients of a polynomial, the weight enumerator, which is defined as
$$
W_{\C}(x)=\sum_{\mathbf{c}\in C}x^{\wt(\mathbf{c})}=\sum_{i=0}^{n}A_{i}x^{i}.
$$
Clearly, the weight distribution derives   the minimum distance, and thus the error correction capability. In this connection, it would be useful in determining as much as one can about the weight distribution of a specific code or a family of a code \cite{BM72,CKNC12,CW84,S12,V12,XLZD}. If the number of nonzero $A_{i}$ in the sequence $(A_1, A_2, \cdots, A_{n})$ equals $t$, then $\C$ is called a   $t-$weight code.

 Let $D=\{d_1,d_2,\cdots, d_n\}\subseteq \f_{q}$ and $\tr$ denote the trace function from $\f_{q}$ onto $\f_2$. Then a binary linear code associated with $D$ can be defined as
 \begin{equation}\label{defcode}
 \C_{D}=\left\{\left(\tr(xd_1), \tr(xd_2),\cdots, \tr(xd_n)
 \right):x\in \f_{q}\right\}.
 \end{equation}
 The set $D$ is said to be the defining set of $\C_D$. This construction of linear codes was proposed by Ding \cite{D09,D15,DLN08,D07} and has attracted a lot of attention in the last nine years. This approach is generic as many classes of constant, $2$-weight and $3$-weight linear codes were produced  \cite{DD14,DD15,DGZ13,DY13,QDX15,ZD14,ZLFH15}.

 For $a\in \f_2$, the defining set $D_{a}$ is given by
\begin{equation*}
D_{a}=\left\{x\in\f_{q}^{*}:\tr\left(x^{2^{h}+1}+x\right)=a\right\}.
\end{equation*}
Obviously, if $a=0$ and $h=1$, then
$$
D_{0}=\left\{x\in \f_{q}^{*}:\tr\left(x^3+x\right)=0\right\},
$$
that is the same as the defining set of $\C_{D}$ depicted in \cite{Xiang}. Thus the code $\C_{D_{a}}$ presented in this paper is a generalization of that in \cite{Xiang}.

In this paper, we will study the weight distribution of $\C_{D_{a}}$ of \eqref{defcode}. As it turns out that $\C_{a}$ is a linear code with $3$ weights, if $m/h>2$. This implies that this kind of linear codes may be of use in secret sharing schemes \cite{DY13}. We should mention that the idea of solving the weight distribution of $\C_{D_{a}}$ is inspired by that in \cite{Xiang}, which was well-written and inspiring.

The rest of this paper is organized as follows. In Section $2$, we  introduce some notations and  basic results of exponential sums over finite fields, and present some useful conclusions in \cite{C99}.  Section $3$ is devoted to investigating  the weight distribution of the linear codes defined in \eqref{defcode}. In Section $4$, we conclude this paper, and make some comments on the applications of these codes in secret sharing schemes.
\section{Preliminaries}
In this section, we firstly  introduce the definition of  group characters. Then we present some lemmas, which are necessary in proving the main results.

Let $G$ be a finite abelian group. A \textit{character} $\chi$ of $G$ is a homomorphism from $G$ into
the multiplicative group $U$ of complex numbers of absolute value $1$--that is, a mapping from $G$ into $U$ with
$$
\chi(g_1g_2)=\chi(g_1)\chi(g_2)
$$
for all $g_1$, $g_2\in G$ \cite{LN97}. The character $\chi_0$ defined by $\chi_0(g)=1$ for all $g\in G$ is called the \textit{trivial} character of $G$. All other characters of $G$ are called \textit{nontrivial}.

The following lemma present a basic property of characters.
\begin{lemma}[\cite{LN97}, Theorem 5.4]\label{lem1}
If $\chi$ is a nontrivial character of the finite abelian group, then
$$
\sum_{g\in G}\chi(g)=0.
$$
\end{lemma}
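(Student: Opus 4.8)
The statement to prove is the standard orthogonality-type fact that a nontrivial character of a finite abelian group sums to zero. The plan is to exploit the homomorphism property together with the fact that left translation by a fixed group element permutes $G$. First I would fix notation: write $S=\sum_{g\in G}\chi(g)$ for the sum in question, and recall that, by definition, $\chi$ being nontrivial means there is some element $a\in G$ with $\chi(a)\neq 1$. The whole proof will hinge on comparing $S$ with $\chi(a)S$.

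The key computation is as follows. Since $\chi$ is a homomorphism, $\chi(a)\chi(g)=\chi(ag)$ for every $g\in G$, so
\[
\chi(a)S=\sum_{g\in G}\chi(a)\chi(g)=\sum_{g\in G}\chi(ag).
\]
Now the map $g\mapsto ag$ is a bijection of $G$ onto itself (its inverse is $g\mapsto a^{-1}g$), so as $g$ runs over $G$ the product $ag$ also runs over all of $G$ exactly once. Re-indexing the last sum by $g'=ag$ therefore gives $\chi(a)S=\sum_{g'\in G}\chi(g')=S$. Hence $(\chi(a)-1)S=0$, and because $\chi(a)-1\neq 0$ by the choice of $a$, we conclude $S=0$, which is the claim.

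There is essentially no obstacle here: the only point requiring a moment's care is the justification that translation is a bijection, so that the re-indexing of the sum is legitimate; everything else is a one-line algebraic manipulation. One could phrase the argument slightly differently (e.g.\ noting that $\{\chi(g):g\in G\}$ forms a subgroup of the roots of unity, or invoking Lagrange's theorem), but the translation argument is the shortest and is self-contained given only the definition of a character.
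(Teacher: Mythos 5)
Your proof is correct and is precisely the standard translation argument used in the cited source (Lidl--Niederreiter, Theorem 5.4), which the paper itself quotes without reproof. Nothing to add: fixing $a$ with $\chi(a)\neq 1$, using that $g\mapsto ag$ permutes $G$, and concluding $(\chi(a)-1)S=0$ is exactly the intended argument.
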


In a finite field there are two finite abelian groups--namely, the additive group and the multiplicative group of $\f_q$.   Characters of the \textit{additive group} are called  \textit{additive characters}  of $\f_q$.

To present additive characters of a finite field, we need to introduce an important mapping from $\f_{2^a}$ to $\f_{2^b}$, where $a$ and $b$ denote two integers and satisfy  $b\mid a$.
\begin{definition}[\cite{LN97}, Definition 2.22]
For $\alpha\in \f_{2^a}$, the \textit{trace} $\tr_b^{a}(\alpha)$ of $\alpha$ over $\f_{2^b}$ is defined by
$$
\tr_{b}^{a}(\alpha)=\alpha+\alpha^{2^b}+\cdots+\alpha^{2^{b(l-1)}},
$$
where $l=a/b$. If $b=1$, then $\tr_1^{a}(\alpha)$ is called the absolute trace of $\alpha$.
\end{definition}
For simplicity, we use $\tr$ to denote the trace function from $\f_{q}$ to $\f_2$ in this paper.   By Theorem $2.23$ in \cite{LN97}, $\tr_{b}^{a}(\alpha+\beta)=\tr_b^{a}(\alpha)+\tr_b^{a}(\beta)$ for all $\alpha,\beta\in \f_{2^a}$, and $\tr_b^{a}$ is a linear transformation from $\f_{2^a}$ onto $\f_{2^b}$, where both $\f_{2^a}$ and $\f_{2^b}$ are viewed as vector apace over $\f_{2^b}$.

By the linearity of   $\tr$, the function $\chi_1$ defined by
$$
\chi_1(c)=e^{2\pi i\tr(c)/2}=e^{\pi i\tr(c)} \ \textrm{for \ all\ } c\in \f_q
$$
is an additive  character of $\f_q$.

For $a,b\in \f_q$, define the following exponential sum
$$
S_{h}(a,b)=\sum_{x\in \f_q}\chi_1\left(x^{2^{h}+1}+bx\right),
$$
where $h$ is a proper divisor of $m$.

In order to determine the weight distribution of the code $\C_{D_a}$ defined in \eqref{defcode}, we need the values of $S_{h}(a,b)$, which are presented in the following lemmas.
\begin{lemma}[\cite{C99}, Theorem 4.1]\label{lem2}
When $m/h$ is odd, we have
$$\sum_{x\in \mathbb{F}_{q}}\chi_1\left(ax^{2^{h}+1}\right)=0 $$
for each $a\in\mathbb{F}_{q}^{*}.$
\end{lemma}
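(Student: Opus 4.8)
The plan is to show that the hypothesis $m/h$ odd forces $x\mapsto x^{2^{h}+1}$ to be a permutation of $\f_q$, after which the sum collapses to an ordinary additive character sum that vanishes by Lemma \ref{lem1}.

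\emph{Step 1 (the gcd computation, the only real content).} Since $h$ is a proper divisor of $m$, write $m=h\ell$ with $\ell=m/h$ odd. Modulo $2^{h}+1$ we have $2^{h}\equiv-1$, hence $2^{m}=(2^{h})^{\ell}\equiv(-1)^{\ell}=-1\pmod{2^{h}+1}$, so $2^{m}-1\equiv-2\pmod{2^{h}+1}$. As $2^{h}+1$ is odd this yields $\gcd(2^{h}+1,\,q-1)=\gcd(2^{h}+1,\,2)=1$. It is exactly here that oddness of $m/h$ enters: if $\ell$ were even one would instead get $2^{h}+1\mid q-1$, and the conclusion would fail.

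\emph{Step 2 (reduction and conclusion).} Because $\f_q^{*}$ is cyclic of order $q-1$ and $\gcd(2^{h}+1,q-1)=1$, raising to the power $2^{h}+1$ is a bijection of $\f_q^{*}$; it also fixes $0$, so $x\mapsto x^{2^{h}+1}$ permutes $\f_q$. Hence, for any $a\in\f_q^{*}$,
$$\sum_{x\in\f_q}\chi_1\!\left(ax^{2^{h}+1}\right)=\sum_{u\in\f_q}\chi_1(au)=\sum_{v\in\f_q}\chi_1(v),$$
the last equality holding since $u\mapsto au$ is a bijection of $\f_q$ for $a\neq0$. Finally, the absolute trace $\tr$ is surjective onto $\f_2$, so $\chi_1$ is a nontrivial additive character of $\f_q$, and Lemma \ref{lem1} gives $\sum_{v\in\f_q}\chi_1(v)=0$, which is the desired identity.

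I do not expect a serious obstacle here: the statement reduces to the elementary permutation phenomenon above, and the proof in \cite{C99} presumably amounts either to the same observation or to a Weil-sum / Stickelberger evaluation that specialises to $0$ in this case. The one point that demands care is the congruence bookkeeping in Step~1, since it is the sole place where the hypothesis ``$m/h$ odd'' is actually used.
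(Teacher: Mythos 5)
Your proof is correct: the congruence $2^{m}\equiv(-1)^{m/h}=-1\pmod{2^{h}+1}$ does give $\gcd(2^{h}+1,q-1)=1$, so $x\mapsto x^{2^{h}+1}$ permutes $\f_q$ and the sum vanishes by orthogonality of the nontrivial character $\chi_1$. The paper itself supplies no proof here — the lemma is quoted verbatim from Coulter \cite{C99} — and your argument is the standard one underlying that result, so there is nothing to reconcile.
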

\begin{lemma}[\cite{C99}, Theorem 4.6]\label{lem3.1}
Suppose $m/h$ is odd,  then we have
$$
S_{h}(1,1)={\left(\frac{2}{m/h}\right)}^{h}2^{\frac{m+h}{2}},
$$
where and hereafter $(\frac{\cdot}{\cdot})$ denotes  the Jacobi symbol.
\end{lemma}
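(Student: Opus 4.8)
The plan is to read $S_h(1,1)=\sum_{x\in\f_q}(-1)^{f(x)}$, with $f(x)=\tr(x^{2^{h}+1})+\tr(x)$, as the Weil sum of a quadratic function on $\f_q$ viewed as an $m$-dimensional space over $\f_2$, and to evaluate it by the structure theory of quadratic forms in characteristic $2$. First I would record the symmetric bilinear form attached to $f$, namely $B(x,y)=f(x+y)+f(x)+f(y)=\tr\bigl(x^{2^{h}}y+xy^{2^{h}}\bigr)$; replacing $xy^{2^{h}}$ by its Frobenius image $x^{2^{m-h}}y^{2^{m}}=x^{2^{m-h}}y$ inside the trace turns this into $B(x,y)=\tr\bigl((x^{2^{h}}+x^{2^{m-h}})y\bigr)$. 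Since the trace form of $\f_q$ is non-degenerate, the radical $W$ of $B$ is the kernel of the $\f_2$-linear map $x\mapsto x^{2^{h}}+x^{2^{m-h}}$; raising $x^{2^{h}}+x^{2^{m-h}}=0$ to the $2^{h}$th power and using $x^{2^{m}}=x$ shows this kernel is $\{x:x^{2^{2h}}=x\}=\f_{2^{2h}}\cap\f_q=\f_{2^{\gcd(2h,m)}}$, and since $m/h$ is odd we have $\gcd(2h,m)=h\gcd(2,m/h)=h$. Hence $W=\f_{2^{h}}$, the rank of $f$ is $m-h$, and this rank is even because $m=2e$ is even while $m/h$ odd forces $h$ even.

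The second step is to observe that $f$ vanishes identically on the radical: if $x\in\f_{2^{h}}$ then $x^{2^{h}}=x$, so $x^{2^{h}+1}=x^{2}$ and $\tr(x^{2^{h}+1})=\tr(x^{2})=\tr(x)$, giving $f(x)=\tr(x)+\tr(x)=0$. Choosing an $\f_2$-complement $V'$ of $W$ in $\f_q$ and using $B(W,V')=0$ together with $f|_{W}\equiv0$, one gets $f(w+v')=f(v')$ for all $w\in W$ and $v'\in V'$, whence
$$
S_h(1,1)=|W|\sum_{v'\in V'}(-1)^{f(v')}=2^{h}\sum_{v'\in V'}(-1)^{f(v')}.
$$
Here $f|_{V'}$ is a quadratic function on $V'\cong\f_2^{\,m-h}$ with non-degenerate bilinear form, so by the classification of such forms the last sum equals $\varepsilon\,2^{(m-h)/2}$ with $\varepsilon=(-1)^{\mathrm{Arf}(f|_{V'})}\in\{+1,-1\}$; therefore $S_h(1,1)=\varepsilon\,2^{(m+h)/2}$. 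The magnitude in the statement is now explained, and everything is reduced to proving $\varepsilon=\bigl(\tfrac{2}{m/h}\bigr)^{h}$ — which, since $h$ is even here, just amounts to $\varepsilon=+1$.

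Determining this sign is the real obstacle: it is invisible to the rank/radical bookkeeping above, and it is not pinned down by the Weil bound for the Artin--Schreier curve $y^{2}+y=x^{2^{h}+1}+x$, which is too weak by a factor $2^{h/2}$. To handle it I would descend to the subfield by writing $\tr=\tr_1^{h}\circ\tr_h^{m}$ and introducing the $\f_{2^{h}}$-coordinates $X_i=x^{2^{hi}}$ ($0\le i<n$, $n=m/h$), so that the inner function becomes the cyclic quadratic form $\tr_h^{m}(x^{2^{h}+1}+x)=\sum_{i\bmod n}X_iX_{i+1}+\sum_iX_i$ over $\f_{2^{h}}$; its $\f_{2^{h}}$-radical is exactly the diagonal copy of $\f_{2^{h}}$ (matching $W$), and after removing it one is left with a non-degenerate quadratic form in $n-1$ variables over $\f_{2^{h}}$ plus a linear term. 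Summing $(-1)^{\tr_1^{h}(\cdot)}$ over $\f_{2^{h}}^{\,n-1}$ then writes $\varepsilon$ as the Arf-type of this cyclic form times the extra sign produced by completing the square in the linear term; both are computable, with the quadratic Gauss sum over $\f_{2^{h}}$ (whose sign is fixed by Stickelberger's $2$-adic congruence) entering at the end, and the dependence on $n=m/h$ modulo $8$ producing the Jacobi symbol $\bigl(\tfrac{2}{m/h}\bigr)^{h}$. The precise evaluation of the Arf invariant of the cyclic form $\sum_iX_iX_{i+1}$ and of that Gauss sum is exactly the work done in \cite{C99}, and is the step I expect to be the most delicate.
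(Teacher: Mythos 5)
The paper never proves this statement: it is quoted verbatim as Theorem 4.6 of Coulter \cite{C99}, so there is no internal argument to compare yours against. Judged on its own terms, your reduction is correct as far as it goes. The polarization $B(x,y)=\tr\bigl((x^{2^{h}}+x^{2^{m-h}})y\bigr)$, the identification of the radical with $\f_{2^{\gcd(2h,m)}}=\f_{2^{h}}$ (using that $m/h$ is odd), the check that $f$ vanishes identically on the radical (this is exactly where the linear term $\tr(x)$ matters, consistently with Lemma \ref{lem2}, which says the sum without that term is $0$), and the resulting factorization $S_h(1,1)=2^{h}\cdot\varepsilon\,2^{(m-h)/2}$ are all sound. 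This correctly pins down $|S_h(1,1)|=2^{(m+h)/2}$.

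But the lemma asserts a specific sign, and your proposal stops precisely there: you reduce everything to the Arf invariant of the residual cyclic form and then state that its evaluation ``is exactly the work done in \cite{C99}''. Since the magnitude already follows from Lemma \ref{lem3}, that sign is the entire non-trivial content of the statement, and deferring it to the very reference being reproved leaves the proof incomplete. Two further cautions. First, your remark that the problem ``just amounts to $\varepsilon=+1$'' invokes the standing hypothesis $m=2e$ to force $h$ even; yet the paper applies Theorem \ref{theorem1} with $(m,h)=(9,3)$, where $h$ is odd and the Jacobi symbol genuinely contributes $-1$, so the general signed statement cannot be dodged this way. Second, even when $h$ is even, $\varepsilon=+1$ is not automatic from the rank/radical bookkeeping: the Arf invariant of the residual $(m-h)$-dimensional form must actually be computed (for instance via your proposed descent to $\f_{2^{h}}$-coordinates and the quadratic Gauss sum over $\f_{2^{h}}$). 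Until that computation is carried out, the lemma is not proved.
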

\begin{lemma}[\cite{C99}, Theorem 4.2]\label{lem3}
Let $b\in\mathbb{F}_{q}^{*} $ and suppose $m/h$ is odd. Then
$$S_{h}(a,b)=S_{h}\left(1,bc^{-1}\right), $$
where $ c $ is the unique element satisfying
$ c^{2^{h}+1}=a. $ Further we have
$$
S_{h}(1,b)=\left\{\begin{array}{ll}
 0, & \textrm{if\ } \ \tr_{h}^{m}(b)\neq 1, \\
 \pm 2^{\frac{m+h}{2}}, & \textrm{if\ } \ \tr_{h}^{m}(b)= 1,
 \end{array}
 \right.
$$

\end{lemma}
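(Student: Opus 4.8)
I would establish the two assertions separately: the first by a substitution, the second via the theory of quadratic forms over $\f_{2}$. For the first, since $m/h$ is odd write $m=hk$ with $k$ odd; then $2^{h}\equiv-1\pmod{2^{h}+1}$, so $2^{m}-1\equiv(-1)^{k}-1=-2\pmod{2^{h}+1}$, and as $2^{h}+1$ is odd this forces $\gcd(2^{h}+1,2^{m}-1)=1$. Hence $x\mapsto x^{2^{h}+1}$ is a bijection of $\f_{q}$ fixing $0$, and for $a\in\f_{q}^{*}$ there is a unique $c\in\f_{q}^{*}$ with $c^{2^{h}+1}=a$; substituting $x=c^{-1}z$ in the definition of $S_{h}(a,b)$ turns $ax^{2^{h}+1}+bx$ into $z^{2^{h}+1}+bc^{-1}z$, which yields $S_{h}(a,b)=S_{h}(1,bc^{-1})$.

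For the second assertion, view $Q(x)=\tr(x^{2^{h}+1})$ as a quadratic form on the $\f_{2}$-vector space $\f_{q}$ and $L_{b}(x)=\tr(bx)$ as a linear form, so that $S_{h}(1,b)=\sum_{x\in\f_{q}}(-1)^{Q(x)+L_{b}(x)}$. Expanding $(x+y)^{2^{h}+1}=x^{2^{h}+1}+xy^{2^{h}}+x^{2^{h}}y+y^{2^{h}+1}$ and using $\tr(z)=\tr(z^{2})$ to rewrite $\tr(xy^{2^{h}})=\tr(x^{2^{m-h}}y)$, the associated bilinear form comes out to $B(x,y)=Q(x+y)+Q(x)+Q(y)=\tr\big((x^{2^{h}}+x^{2^{m-h}})y\big)$. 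Its radical is $\{x:x^{2^{h}}=x^{2^{m-h}}\}=\{x:x^{2^{2h}}=x\}=\f_{2^{\gcd(2h,m)}}=\f_{2^{h}}$, where once more $m/h$ odd forces $\gcd(2h,m)=h$. So, with $R:=\f_{2^{h}}$, the radical of $B$ has $\f_{2}$-dimension $h$, the rank $m-h$ of $B$ is even, and the restriction of $B$ to any $\f_{2}$-complement $U$ of $R$ in $\f_{q}$ is nondegenerate (an element of $U$ lying in the radical of $B|_{U}$ would be orthogonal to all of $R\oplus U=\f_{q}$, hence $0$).

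Writing $x=r+u$ with $r\in R$, $u\in U$, and using $B(r,u)=0$, one obtains $Q(x)+L_{b}(x)=(Q(r)+L_{b}(r))+(Q(u)+L_{b}(u))$, so that
\[
S_{h}(1,b)=\Big(\sum_{r\in R}(-1)^{Q(r)+L_{b}(r)}\Big)\Big(\sum_{u\in U}(-1)^{Q(u)+L_{b}(u)}\Big).
\]
On $R=\f_{2^{h}}$ one has $r^{2^{h}}=r$, hence $Q(r)=\tr(r^{2})=\tr(r)=\tr_{1}^{h}(r)$ (the last equality because $m/h$ is odd), while $L_{b}(r)=\tr_{1}^{h}(\tr_{h}^{m}(b)\,r)$. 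Thus $Q|_{R}+L_{b}|_{R}$ is the additive map $r\mapsto\tr_{1}^{h}\big((1+\tr_{h}^{m}(b))r\big)$, which vanishes identically exactly when $\tr_{h}^{m}(b)=1$. In that case the first factor equals $|R|=2^{h}$; otherwise $r\mapsto(-1)^{Q(r)+L_{b}(r)}$ is a nontrivial character of $R$, so by Lemma~\ref{lem1} the first factor---and hence $S_{h}(1,b)$---is $0$.

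It remains to evaluate, when $\tr_{h}^{m}(b)=1$, the second factor $\sum_{u\in U}(-1)^{Q(u)+L_{b}(u)}$. By nondegeneracy of $B|_{U}$ there is a unique $u_{0}\in U$ with $L_{b}(u)=B(u,u_{0})$ for all $u\in U$, whence $Q(u)+L_{b}(u)=Q(u+u_{0})+Q(u_{0})$ and $\sum_{u\in U}(-1)^{Q(u)+L_{b}(u)}=\pm\sum_{u\in U}(-1)^{Q(u)}=\pm2^{(m-h)/2}$, by the classical evaluation of the character sum of a nondegenerate quadratic form in the (even) number $m-h$ of variables over $\f_{2}$. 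Multiplying by the first factor $2^{h}$ gives $S_{h}(1,b)=\pm2^{h}\cdot2^{(m-h)/2}=\pm2^{(m+h)/2}$, as claimed. Apart from the routine Frobenius bookkeeping, the substantive point is the identification of the radical of $B$ with $\f_{2^{h}}$, together with the observation that the linear correction $L_{b}$ annihilates that radical exactly when $\tr_{h}^{m}(b)=1$; determining the actual sign $\pm$ is genuinely harder---this is the content of Lemma~\ref{lem3.1} in the case $b=1$---but it is not required for the present statement.
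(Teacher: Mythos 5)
Your argument is correct, and it is worth noting that the paper itself offers no proof of this lemma at all --- it is imported verbatim as Theorem~4.2 of Coulter \cite{C99} --- so there is no in-paper argument to compare against line by line. Your first step (the coprimality $\gcd(2^{h}+1,2^{m}-1)=1$ for $m/h$ odd, hence the substitution $x=c^{-1}z$) is exactly how the reduction to $a=1$ is done in the source. For the evaluation of $S_{h}(1,b)$ you take the standard quadratic-form route: polarize $Q(x)=\tr(x^{2^{h}+1})$, identify the radical of $B(x,y)=\tr\bigl((x^{2^{h}}+x^{2^{m-h}})y\bigr)$ with $\f_{2^{\gcd(2h,m)}}=\f_{2^{h}}$, split $\f_{q}=R\oplus U$, and observe that the restriction of $Q+L_{b}$ to the radical is the additive character $r\mapsto\tr_{1}^{h}\bigl((1+\tr_{h}^{m}(b))r\bigr)$, which kills the sum unless $\tr_{h}^{m}(b)=1$; the nondegenerate even-rank piece then contributes $\pm2^{(m-h)/2}$. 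All the computations check out (in particular $m-h$ is even because $m/h$ odd forces $m\equiv h\pmod 2$, and the completion of the square on $U$ is legitimate since $B|_{U}$ is nondegenerate). Coulter's own proof is organized instead around the linearized polynomial $a^{2^{h}}x^{2^{2h}}+ax$ and squaring the Weil sum, but the kernel computation there is the same as your radical computation, so the two approaches are essentially equivalent in substance; yours has the advantage of being self-contained and of making transparent exactly why the condition $\tr_{h}^{m}(b)=1$ is the dichotomy. You are also right that pinning down the sign is the genuinely harder part and is not claimed by this lemma.
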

\begin{lemma}[\cite{C99}, Theorem 5.2]\label{lem4}
Let $m/h$ be even so that $m=2e$ for some integer $e$. Then
$$
S_{h}(a,0)=\left\{\begin{array}{ll}
(-1)^{\frac{e}{h}}2^{e}, & \textrm{if\ } \ a\neq g^{t(2^{h}+1)}  \textrm{\ for any integer\ } t, \\
-(-1)^{\frac{e}{h}}2^{e+h}, & \textrm{if\ } \ a= g^{t(2^{h}+1)}  \textrm{\ for some integer\ } t,
\end{array}
\right.
$$
where $g$ is a generator of $\mathbb{F}_{q}^{*}$.
\end{lemma}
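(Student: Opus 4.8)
The plan is to expand $S_{h}(a,0)=\sum_{x\in\f_q}\chi_1\bigl(ax^{2^{h}+1}\bigr)$ over the group of multiplicative characters of $\f_q^{*}$ of order dividing $2^{h}+1$, and then to evaluate the resulting Gauss sums by descending to the subfield $\f_{2^{2h}}$. Since $h\mid m$ and $m/h$ is even we have $2h\mid m$, so $2^{h}+1$ divides $2^{2h}-1$ and hence $2^{m}-1$; thus $\gcd(2^{h}+1,2^{m}-1)=2^{h}+1$, the $(2^{h}+1)$-th powers form a subgroup $H\le\f_q^{*}$ of index $2^{h}+1$, and $a=g^{t(2^{h}+1)}$ for some $t$ exactly when $a\in H$. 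Using $\mathbbm{1}_{H}(y)=\tfrac{1}{2^{h}+1}\sum_{\psi^{2^{h}+1}=1}\psi(y)$ for $y\in\f_q^{*}$, splitting off $x=0$, and substituting $x\mapsto u/a$, one gets
$$
S_{h}(a,0)=\sum_{\substack{\psi\neq\psi_{0}\\ \psi^{2^{h}+1}=1}}\overline{\psi}(a)\,g(\psi),\qquad g(\psi):=\sum_{u\in\f_q^{*}}\psi(u)\chi_1(u),
$$
the leading $1$ being absorbed by $g(\psi_{0})=-1$ (Lemma~\ref{lem1}). There are exactly $2^{h}$ summands, each of absolute value $|g(\psi)|=2^{m/2}=2^{e}$.

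I would next show that these $2^{h}$ Gauss sums are all equal, with common value $(-1)^{e/h-1}2^{e}$. Every character $\psi$ of $\f_q^{*}$ with $\psi^{2^{h}+1}=1$ is the lift $\eta\circ N$ of a unique character $\eta$ of $\f_{2^{2h}}^{*}$ of order dividing $2^{h}+1$, where $N=N_{\f_q/\f_{2^{2h}}}$: the norm is surjective, so the induced map of character groups is injective and order-preserving, and both groups have order $2^{h}+1$. The Davenport--Hasse relation then gives $g(\eta\circ N)=(-1)^{n-1}g_{0}(\eta)^{\,n}$ with $n=m/(2h)=e/h$ and $g_{0}$ the Gauss sum over $\f_{2^{2h}}$ (the additive characters match because $\tr=\tr_{1}^{2h}\circ\tr_{2h}^{m}$). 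It remains to evaluate $g_{0}(\eta)$ for $\eta\neq\eta_{0}$. For $x\in\f_{2^{2h}}^{*}$ one has $x^{2^{h}+1}=N_{\f_{2^{2h}}/\f_{2^{h}}}(x)\in\f_{2^{h}}^{*}$, and $\tr$ vanishes identically on $\f_{2^{h}}$ because $[\f_{2^{2h}}:\f_{2^{h}}]=2$ is even; hence $\sum_{x\in\f_{2^{2h}}}\chi_1\bigl(x^{2^{h}+1}\bigr)=1+(2^{h}+1)(2^{h}-1)=2^{2h}$. Reading the left side as $\sum_{\eta\neq\eta_{0}}g_{0}(\eta)$, a sum of $2^{h}$ complex numbers of modulus $2^{h}$ that equals $2^{2h}$, we are in the equality case of the triangle inequality, so $g_{0}(\eta)=2^{h}$ for every nontrivial $\eta$, and therefore $g(\psi)=(-1)^{n-1}2^{hn}=(-1)^{e/h-1}2^{e}$ for all admissible $\psi\neq\psi_{0}$.

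Substituting back, $S_{h}(a,0)=(-1)^{e/h-1}2^{e}\sum_{\psi\neq\psi_{0},\,\psi^{2^{h}+1}=1}\overline{\psi}(a)$. By orthogonality, $\sum_{\psi^{2^{h}+1}=1}\overline{\psi}(a)=(2^{h}+1)\mathbbm{1}_{H}(a)$, so the remaining sum equals $2^{h}$ when $a\in H$ and $-1$ when $a\notin H$. Hence $S_{h}(a,0)=(-1)^{e/h-1}2^{e+h}=-(-1)^{e/h}2^{e+h}$ when $a=g^{t(2^{h}+1)}$, and $S_{h}(a,0)=-(-1)^{e/h-1}2^{e}=(-1)^{e/h}2^{e}$ otherwise, which is exactly the asserted formula.

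The step I expect to be the main obstacle is pinning down the sign $(-1)^{e/h}$: the magnitude is routine, but getting the sign right requires both the coincidence of the $2^{h}$ Gauss sums and the determination of their common sign, which is precisely what the descent to $\f_{2^{2h}}$ together with Davenport--Hasse provides. An alternative is to regard $Q_{a}(x)=\tr(ax^{2^{h}+1})$ as an $\f_{2}$-quadratic form on $\f_q$: its associated bilinear form has radical $\{x:a^{2^{h}}x^{2^{2h}}=ax\}$, which has $\f_{2}$-dimension $0$ or $2h$ according as $a\notin H$ or $a\in H$, and $Q_{a}$ vanishes on this radical (in the second case after the change of variable $x\mapsto c^{-1}x$ with $c^{2^{h}+1}=a$, reducing to $Q_{1}$ on $\f_{2^{2h}}$, where $Q_{1}\equiv0$ by the same trace-vanishing argument). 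This yields $S_{h}(a,0)=\pm2^{e}$ or $\pm2^{e+h}$ at once, but then fixing the sign amounts to computing the Arf invariant of the nondegenerate quotient form, which appears no easier than the Gauss-sum computation.
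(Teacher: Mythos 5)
The paper offers no proof of this lemma at all: it is imported verbatim from Coulter [C99, Theorem 5.2], so there is no in-paper argument to compare against. Your proposal is a correct, self-contained derivation, and it takes a genuinely different route from Coulter's original treatment (which proceeds elementarily via the linearized polynomial $a^{2^{h}}x^{2^{2h}}+ax$ and counting arguments rather than via Gauss sums). The key steps all check out: since $m/h$ is even we have $2h\mid m$ and $2^{h}+1\mid 2^{2h}-1\mid 2^{m}-1$, so the $(2^{h}+1)$-th powers form the subgroup $H$ of index $2^{h}+1$ and the expansion $S_{h}(a,0)=\sum_{\psi\neq\psi_{0},\,\psi^{2^{h}+1}=1}\overline{\psi}(a)\,g(\psi)$ is valid for $a\in\f_{q}^{*}$; every such $\psi$ is a norm-lift from $\f_{2^{2h}}$ because both groups of characters of order dividing $2^{h}+1$ have exactly $2^{h}+1$ elements and lifting is injective; the identity $\sum_{x\in\f_{2^{2h}}}(-1)^{\tr_{1}^{2h}\left(x^{2^{h}+1}\right)}=2^{2h}$ holds because $x^{2^{h}+1}\in\f_{2^{h}}$ and $\tr_{h}^{2h}$ annihilates $\f_{2^{h}}$, whence the equality case of the triangle inequality pins each subfield Gauss sum to $+2^{h}$; Davenport--Hasse with $n=e/h$ then gives $g(\psi)=(-1)^{e/h-1}2^{e}$ for all admissible nontrivial $\psi$; and the final orthogonality step yields precisely the two stated values $-(-1)^{e/h}2^{e+h}$ on $H$ and $(-1)^{e/h}2^{e}$ off $H$. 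What your approach buys is a transparent explanation of the otherwise mysterious sign $(-1)^{e/h}$, as the Davenport--Hasse sign $(-1)^{n-1}$ combined with the positivity of the subfield Gauss sums; what it costs is reliance on two nontrivial imported facts ($|g(\psi)|=\sqrt{q}$ for nontrivial $\psi$, and Davenport--Hasse), whereas Coulter's development is elementary and simultaneously delivers the companion evaluations (Lemmas \ref{lem3} and \ref{lem5}) that this paper also needs.
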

\begin{lemma}[\cite{C99}, Theorem 5.3]\label{lem5}
Let $b\in\mathbb{F}_{q}^{*} $ and suppose $m/h$ is even so that $m=2e $
for some integer $ e.$ Let $ f(x)=a^{2^{h}}x^{2^{2h}}+ax \in \mathbb{F}_{q}[x]. $ There are two cases.\
\begin{enumerate}

\item If $a\neq g^{t(2^{h}+1)}$ for any integer  $t$, then $f$ is a permutation polynomial of
$\mathbb{F}_{q}. $ Let $x_{0}$ be the unique element satisfying $f(x)=b^{2^{^{h}}}.$ Then
$$S_{h}(a,b)
=(-1)^{\frac{e}{h}}2^{e}\chi_1\left(ax_{0}^{2^{h}+1}\right).$$
\item If $a= g^{t(2^{h}+1)}$ for some integer $t$, then $S_{h}(a,b)=0$ unless the equation
$f(x)=b^{2^{^{h}}} $ is solvable. If this equation is solvable, with solution $ x_{0}$ say, then
$$
S_{h}(a,b)
=\left\{\begin{array}{ll}
-(-1)^{\frac{e}{h}}2^{e+h}\chi_1\left(ax_{0}^{2^{h}+1}\right), & \textrm{if\ } \ \tr_{h}^{m}(a)= 0, \\
(-1)^{\frac{e}{h}}2^{e}\chi_1\left(ax_{0}^{2^{h}+1}\right),  & \textrm{if\ } \ \tr_{h}^{m}(a)\neq 0.
\end{array}
\right.
$$
\end{enumerate}
\end{lemma}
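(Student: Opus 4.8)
\emph{Sketch of proof.} The plan is to recognise $S_h(a,b)$ as the sum attached to the $\f_2$-quadratic form $Q(x)=\tr(ax^{2^{h}+1})$ on $\f_q$ twisted by the linear form $\ell(x)=\tr(bx)$, and to reduce it, by completing the square, to the case $b=0$ already settled in Lemma~\ref{lem4} (note that $\chi_1$ is $\{\pm 1\}$-valued, so $S_h(a,b)\in\R$). The opening computation is the polar form of $Q$: from $(x+y)^{2^{h}+1}=(x^{2^{h}}+y^{2^{h}})(x+y)$ one gets
$$
B(x,y):=Q(x+y)+Q(x)+Q(y)=\tr\!\bigl(a(x^{2^{h}}y+xy^{2^{h}})\bigr),
$$
and applying $\tr(z)=\tr(z^{2^{m-h}})$ to the first summand rewrites this as $B(x,y)=\tr(L_a(x)\,y)$ with $L_a(u)=au^{2^{h}}+a^{2^{m-h}}u^{2^{m-h}}$ linearized. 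Since $L_a(u)^{2^{h}}=a^{2^{h}}u^{2^{2h}}+au=f(u)$, the radical of $B$ equals $\ker L_a$, which is exactly the zero set $V$ of $f$; being linearized, $f$ permutes $\f_q$ iff $V=\{0\}$.

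Next I would pin down which regime of the lemma we are in. For $u\neq 0$, $f(u)=0$ amounts to $u^{2^{2h}-1}=a^{1-2^{h}}$, and since $m/h$ is even, $2^{2h}-1=(2^{h}-1)(2^{h}+1)$ divides $q-1$; hence this equation has $0$ or $2^{2h}-1$ solutions according as $a^{1-2^{h}}$ is not, or is, a $(2^{2h}-1)$-st power. Writing $a=g^{s}$, this happens iff $(2^{2h}-1)\mid s(2^{h}-1)$, i.e. iff $(2^{h}+1)\mid s$, i.e. $a=g^{t(2^{h}+1)}$ for some $t$. So in the first case $V=\{0\}$ and $f$ is a permutation polynomial, while in the second $\dim_{\f_2}V=2h$.

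The heart of the argument is completing the square. If $x_{0}\in\f_q$ satisfies $f(x_{0})=b^{2^{h}}$ — equivalently, raising to the $2^{h}$-th power, $L_a(x_{0})=b$, equivalently $B(x,x_{0})=\ell(x)$ for all $x$ — then the polar identity gives $Q(x)+\ell(x)=Q(x+x_{0})+Q(x_{0})$, whence, after re-indexing,
$$
S_h(a,b)=\chi_1\!\bigl(Q(x_{0})\bigr)\,S_h(a,0)=\chi_1\!\bigl(ax_{0}^{2^{h}+1}\bigr)\,S_h(a,0).
$$
Since both $B$ and (by the next paragraph) $Q$ vanish on $V$, the quantity $Q(x_{0})$ is constant on the coset $x_{0}+V$, so the right-hand side is independent of the choice of solution. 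In the first case of the lemma such an $x_{0}$ exists and is unique, and Lemma~\ref{lem4} gives $S_h(a,0)=(-1)^{e/h}2^{e}$; substituting yields the stated formula.

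For the second case I would first dispose of non-solvability by squaring: the substitution $x\mapsto x+u$ gives $|S_h(a,b)|^{2}=q\sum_{u\in V}\chi_1(Q(u)+\ell(u))$. If $u\in V$ then $f(u)=0$, so $(au^{2^{h}+1})^{2^{h}}=a^{2^{h}}u^{2^{2h}}\cdot u^{2^{h}}=au^{2^{h}+1}$, i.e. $au^{2^{h}+1}\in\f_{2^{h}}$; and since $m/h$ is even, $\tr_{h}^{m}$ kills $\f_{2^{h}}$, so $Q(u)=\tr(au^{2^{h}+1})=0$. Hence $u\mapsto\ell(u)$ is $\f_2$-linear on $V$, and the inner sum equals $|V|$ or $0$ according as $b$ lies in the trace-orthogonal complement $V^{\perp}$ or not; as $B$ is symmetric, $L_a$ is self-adjoint for the trace form, so $V^{\perp}=\mathrm{Im}(L_a)$, and $b\in\mathrm{Im}(L_a)$ is exactly the solvability of $L_a(x)=b$, i.e. of $f(x)=b^{2^{h}}$. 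Thus $S_h(a,b)=0$ when that equation is unsolvable, and when it is solvable the completing-the-square step reduces $S_h(a,b)$ to $\chi_1(ax_{0}^{2^{h}+1})S_h(a,0)$ with $S_h(a,0)$ supplied by Lemma~\ref{lem4}. The main obstacle is the finer case distinction in this second regime, indexed by $\tr_{h}^{m}(a)$: reaching it requires a more careful evaluation of the Gauss-sum-type quantity $S_h(a,0)$ than the statement of Lemma~\ref{lem4} literally records, and this is precisely the point at which the even parity of $m/h$ and the structure of $a$ as a $(2^{h}+1)$-st power enter most delicately.
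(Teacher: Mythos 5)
This lemma is imported verbatim from \cite{C99} (Theorem 5.3) and the paper gives no proof of it, so your reconstruction has to stand on its own --- and it does. The argument is essentially correct and, contrary to your closing worry, essentially complete: the polarization $B(x,y)=\tr(L_a(x)y)$ with $L_a(u)=au^{2^{h}}+a^{2^{m-h}}u^{2^{m-h}}$, the identity $L_a(u)^{2^{h}}=f(u)$ identifying the radical $V$ with $\ker f$, the count $|V|\in\{1,2^{2h}\}$ according as $a$ is or is not of the form $g^{t(2^{h}+1)}$ (this is where $2h\mid m$ enters, so that $2^{2h}-1\mid q-1$), the completing-the-square identity $S_h(a,b)=\chi_1(ax_{0}^{2^{h}+1})S_h(a,0)$ whenever $L_a(x_{0})=b$, and the squaring identity $|S_h(a,b)|^{2}=q\sum_{u\in V}\chi_1(au^{2^{h}+1}+bu)=q\,|V|$ or $0$ according as $b$ does or does not lie in $\mathrm{Im}(L_a)$ (using $au^{2^{h}+1}\in\f_{2^{h}}$ and the vanishing of $\tr_{h}^{m}$ on $\f_{2^{h}}$ for even $m/h$) are all sound. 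One cosmetic slip: the Frobenius twist $\tr(z)=\tr(z^{2^{m-h}})$ has to be applied to the summand $\tr(axy^{2^{h}})$, not to $\tr(ax^{2^{h}}y)$, to reach the form $\tr(L_a(x)y)$.

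The ``main obstacle'' you flag at the end --- deriving the sub-case split on $\tr_{h}^{m}(a)$ --- is not a gap in your argument but an error in the statement as transcribed in this paper. Your own computation forces $|S_h(a,b)|\in\{0,2^{e+h}\}$ throughout case 2, since there $|V|=2^{2h}$ and $Q$ vanishes on $V$; combined with your reduction and Lemma \ref{lem4} this yields
$$
S_h(a,b)=-(-1)^{e/h}2^{e+h}\chi_1\left(ax_{0}^{2^{h}+1}\right)
$$
in \emph{every} solvable instance of case 2, with no dependence on $\tr_{h}^{m}(a)$. The printed branch ``$\tr_{h}^{m}(a)\neq 0$ gives $(-1)^{e/h}2^{e}\chi_1(\cdot)$'' has the wrong magnitude, and that branch is not vacuous: for $m=4$, $h=1$ and $a$ a cube different from $1$ one has $\tr_{1}^{4}(a)=1$, yet $f(x)=b^{2}$ is solvable for three nonzero $b$. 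So the uniform formula your argument produces is the correct (and internally consistent) statement, and the trace dichotomy should be deleted. The discrepancy is harmless for the rest of the paper, which only ever invokes this lemma with $a=1$, where $\tr_{h}^{m}(1)=m/h\equiv 0$.
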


\begin{lemma}\label{lemma8}
Let symbols be the same as before. Then the equation
\begin{equation}\label{eq3}
x^{2^{2h}}+x=1
\end{equation}
 has solutions in $\f_{q}$ if and only if $m/h\equiv0\pmod{4}$.
\end{lemma}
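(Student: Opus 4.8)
The plan is to view the left-hand side of \eqref{eq3} as an additive operator on $\f_q$ and to decide when $1$ lies in its image. Put $d=\gcd(2h,m)$ and $L(x)=x^{2^{2h}}+x$; since $x\mapsto x^{2^{2h}}$ is a power of the Frobenius, $L$ is $\f_2$-linear (in fact $\f_{2^d}$-linear) on $\f_q$. The first step is to compute $\ker L$: the equation $x^{2^{2h}}=x$ has solution set $\f_{2^{2h}}\cap\f_q=\f_{2^{d}}$, so $|\ker L|=2^{d}$ and hence $|\mathrm{Im}(L)|=q/2^{d}=2^{m-d}$. Thus \eqref{eq3} is solvable in $\f_q$ if and only if $1\in\mathrm{Im}(L)$, and the whole problem reduces to this membership question.

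The second step identifies $\mathrm{Im}(L)$ with the kernel of a relative trace. Let $k=m/d$ and set $T(y)=\sum_{i=0}^{k-1}y^{2^{2hi}}$. From $\gcd(2h,m)=d$ one gets $\gcd(2h/d,k)=1$, so multiplication by $2h/d$ permutes $\Z/k\Z$; hence the exponents $\{\,2hi\bmod m:0\le i<k\,\}$ coincide with $\{\,dj:0\le j<k\,\}$, and therefore $T=\tr^m_d$, the trace from $\f_{2^m}$ onto $\f_{2^d}$, which is surjective by Theorem $2.23$ of \cite{LN97}; in particular $|\ker T|=2^{m-d}$. A telescoping sum in characteristic $2$ gives $T(L(x))=x^{2^{2hk}}+x$, and $x^{2^{2hk}}=x$ for $x\in\f_q$ because $d\mid 2h$ forces $m\mid 2hk$; thus $T\circ L\equiv 0$, i.e. $\mathrm{Im}(L)\subseteq\ker T$. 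Since both sets have cardinality $2^{m-d}$, they are equal, so $1\in\mathrm{Im}(L)$ if and only if $\tr^m_d(1)=0$.

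The third step is a parity computation: $\tr^m_d(1)=\sum_{j=0}^{k-1}1^{2^{dj}}=k\cdot 1$, which vanishes in $\f_{2^d}$ precisely when $k$ is even. Hence \eqref{eq3} is solvable if and only if $k=m/\gcd(2h,m)$ is even. It then remains to translate this into a condition on $n:=m/h$. If $n$ is odd then $\gcd(2h,m)=h$, so $k=n$ is odd and \eqref{eq3} has no solution, consistent with $4\nmid n$; if $n$ is even then $\gcd(2h,m)=2h$, so $k=n/2$, which is even exactly when $4\mid n$. In all cases \eqref{eq3} is solvable in $\f_q$ if and only if $m/h\equiv 0\pmod 4$, which is the assertion.

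The only step requiring genuine care is the exponent bookkeeping modulo $m$: one must check $\gcd(2h/d,k)=1$ in order to recognise $T$ as exactly the relative trace $\tr^m_d$, and one must pin down $|\ker T|=2^{m-d}$ via its surjectivity so that the inclusion $\mathrm{Im}(L)\subseteq\ker T$ upgrades to an equality by counting. Everything after that—the telescoping identity, the parity of $k$, and the elementary gcd manipulation relating $k$ to $m/h$—is routine.
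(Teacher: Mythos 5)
Your proof is correct, but it takes a genuinely different route from the paper's. You characterize the image of the additive polynomial $L(x)=x^{2^{2h}}+x$ in full generality: computing $\ker L=\f_{2^d}$ with $d=\gcd(2h,m)$, showing $\mathrm{Im}(L)=\ker\tr_d^m$ by the inclusion--plus--counting argument, and reducing solvability of $L(x)=1$ to the parity of $k=m/d$; the case analysis on $m/h$ then falls out of elementary gcd bookkeeping. The paper instead argues each direction ad hoc: for sufficiency it observes that $x^{2^{2h}}+x=\tr_{2h}^{4h}(x)$ on $\f_{2^{4h}}$ and uses surjectivity of that trace together with $\f_{2^{4h}}\subseteq\f_q$ when $4h\mid m$; for necessity in the case $m/h\equiv 2\pmod 4$ it applies $\tr_{2h}^{m}$ to both sides of the equation and derives the contradiction $0=\tr_{2h}^m(1)=m/(2h)=1$, and it disposes of odd $m/h$ by embedding $\f_q$ into $\f_{2^{2m}}$ and invoking the previous case. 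Your argument buys uniformity (all residues of $m/h$ are handled by one parity computation) and a reusable structural fact, namely $\mathrm{Im}(x^{2^s}+x)=\ker\tr_{\gcd(s,m)}^{m}$, at the cost of the exponent bookkeeping you flag; the paper's argument is shorter and needs only surjectivity of the relative trace, but its necessity step is a two-case trick rather than a single criterion. One incidental benefit of your route: the paper's proof of Lemma~\ref{lemma9} later needs to know \emph{which} elements solve \eqref{eq3} (the preimage of $1$ under $\tr_{2h}^{4h}$), and your description of the solution set as a coset of $\ker L=\f_{2^{d}}$ would serve equally well there.
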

\begin{proof}
Sufficiency: Suppose $m/h\equiv0\pmod{4}$. Note that
$$
x^{2^{2h}}+x=\tr_{2h}^{4h}(x) \textrm{ for\ all } x\in \f_{2^{4h}}.
$$
The equation \eqref{eq3} has solutions in $\f_{2^{4h}}$, as $\tr_{2h}^{4h}$ is  surjective  from $\f_{2^{4h}}$ to $\f_{2^{2h}}$. Then the desired result follows from the fact that $\f_{2^{2h}}\subseteq \f_{2^m}$.

Necessity: Suppose $x_0\in \f_q$ is a solution of \eqref{eq3}. If $m/h\not\equiv0\pmod{4}$, then $m/h\equiv2\pmod{4}$ or $m/h\equiv1\pmod{2}$. For the case  $m/h\equiv2\pmod{4}$, we have
\begin{equation}\label{eq2}
\tr_{2h}^{m}\left(x^{2^{2h}}+x_0\right)=2\tr_{2h}^{m}(x_0)=0.
\end{equation}
 On the other hand, we obtain
 $$\tr_{2h}^{m}(x_0^{2^{2h}}+x_0)=\tr_{2h}^{m}(1)=m/(2h)=1,
 $$
 which contradicts with \eqref{eq2}.

 If  $m/h\equiv1\pmod{2}$,  we get  $(2m)/h\equiv2\pmod{4}$.  By the discussion  above, we know that the equation $x^{2^{2h}}+x=1$ has no solution in $\f_{2^{2m}}$. So does it in $\f_{q}=\f_{2^{m}}$, since $\f_{q}\subseteq \f_{2^{2m}}$.
\end{proof}
\begin{lemma}\label{lemma9}
Let $m/h\geq2$ be even. Then
$$
S_{h}(1,1)=\left\{\begin{array}{ll}
                    0, & \textrm{if $m/h$} \equiv2\pmod{4}, \\
                    -(-1)^{\frac{m}{4}}2^{e+h}, & \textrm{if $m/h$} \equiv0\pmod{4}.
                  \end{array}
                  \right.
$$
\end{lemma}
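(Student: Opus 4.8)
The plan is to evaluate $S_h(1,1)=\sum_{x\in\f_q}\chi_1(x^{2^h+1}+x)$ by specializing Lemma~\ref{lem5} to the case $a=1$, $b=1$, and then deciding the sign of the resulting character value by invoking Lemma~\ref{lemma8}. First I would observe that $a=1=g^{0\cdot(2^h+1)}$, so we are always in the second case of Lemma~\ref{lem5}: the associated linearized polynomial is $f(x)=x^{2^{2h}}+x$, and since $b=1$ we must study solvability of $f(x)=b^{2^h}=1$, i.e. precisely the equation \eqref{eq3}. By Lemma~\ref{lemma8}, this equation has a solution in $\f_q$ if and only if $m/h\equiv 0\pmod 4$; hence when $m/h\equiv 2\pmod 4$ the equation is unsolvable and Lemma~\ref{lem5}(2) gives $S_h(1,1)=0$, which is the first line of the claim.

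For the remaining case $m/h\equiv 0\pmod 4$, fix a solution $x_0\in\f_q$ of $x_0^{2^{2h}}+x_0=1$. I would then check that $\tr_h^m(a)=\tr_h^m(1)=(m/h)\bmod 2=0$, so the first subcase of Lemma~\ref{lem5}(2) applies and
$$
S_h(1,1)=-(-1)^{e/h}\,2^{e+h}\,\chi_1\!\left(x_0^{2^h+1}\right).
$$
It remains to show the two sign contributions combine to $-(-1)^{m/4}2^{e+h}$; concretely I must verify $(-1)^{e/h}\chi_1(x_0^{2^h+1})=(-1)^{m/4}$, equivalently that $(-1)^{e/h}$ and $\chi_1(x_0^{2^h+1})=(-1)^{\tr(x_0^{2^h+1})}$ have the appropriate relationship. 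Since $S_h(1,1)$ is a real (indeed integer) quantity of absolute value $2^{e+h}$, the character value is $\pm1$, so this is a genuine parity computation, not a sign ambiguity.

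The main obstacle is this last step: pinning down $\tr(x_0^{2^h+1})$ for a solution $x_0$ of $x_0^{2^{2h}}+x_0=1$. The natural approach is to compute in the subfield $\f_{2^{4h}}\subseteq\f_q$ (legitimate because $4h\mid m$ when $m/h\equiv0\pmod 4$), where $x_0$ can be taken with $\tr_{2h}^{4h}(x_0)=1$, write $x_0^{2^h+1}=x_0\cdot x_0^{2^h}$ and use the relation $x_0^{2^{2h}}=x_0+1$ to reduce $x_0^{2^h+1}$ to a manageable form, then apply the trace transitivity $\tr=\tr_1^{2h}\circ\tr_{2h}^{m}$ together with $\tr_{2h}^m(y)=(m/(2h))\,\tr_{2h}^{4h}(y)$-type identities to evaluate it. I expect the answer to depend on $e/h \bmod 2$ exactly so as to produce the uniform factor $(-1)^{m/4}$ (note $m/4=e/2$), and the bookkeeping between $e/h$, $m/4$, and $\tr_{2h}^{4h}(x_0^{2^h+1})$ will be the delicate part. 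An alternative, if the direct computation is messy, is to combine Lemma~\ref{lem4} (the value of $S_h(g^{t(2^h+1)},0)=S_h(1,0)$) with a completing-the-square / substitution $x\mapsto x+x_0$ argument to relate $S_h(1,1)$ to $S_h(1,0)$ up to the character factor $\chi_1(x_0^{2^h+1})$, which is essentially the content of Lemma~\ref{lem5}(2) and may streamline the sign determination.
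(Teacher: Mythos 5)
Your proposal follows exactly the paper's route: specialize Lemma~\ref{lem5}(2) with $a=b=1$ (noting $1=g^{0\cdot(2^h+1)}$ and $\tr_h^m(1)=0$), use Lemma~\ref{lemma8} to get $S_h(1,1)=0$ when $m/h\equiv2\pmod 4$, and reduce the remaining case to evaluating $\chi_1(x_0^{2^h+1})$ for a solution $x_0$ of $x^{2^{2h}}+x=1$. The one step you leave as a sketch --- showing $\tr(x_0^{2^h+1})\equiv m/4\pmod 2$ --- is carried out in the paper by precisely the subfield computation you describe (split on the parity of $m/(4h)$, use $x_0^{2^{2h}}=x_0+1$ to get $\tr_{2h}^{4h}(x_0^{2^h+1})=x_0^{2^h}+x_0+1$, whose $\tr_h^{2h}$ equals $1$, giving $h\equiv m/4\pmod 2$), and it does come out to $(-1)^{m/4}$ as you predict, so the approach is essentially identical and correct.
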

\begin{proof}
Note that $\tr_{h}^{m}(1)=0$, as $m/h$ is even. It then follows from Lemmas \ref{lem5} and \ref{lemma8} that
\begin{align*}
S_{h}(1,1)&=\left\{\begin{array}{ll}
                    0, & \textrm{if $m/h$}\equiv2\pmod{4}, \\
                    -2^{e+h}\chi_0\left(x_0^{2^h+1}\right), & \textrm{if $m/h$}\equiv0\pmod{4},
                  \end{array}
                  \right.
\end{align*}
where $x_0$ is a solution of the equation $x^{2^{2h}}+x=1$, when $m/h\equiv0\pmod{4}$. By the proof of Lemma \ref{lemma8}, we know that the solutions of \eqref{eq3} are just the preimage of $\tr_{2h}^{4h}(1)$ and obviously they are in $\f_{2^{4h}}$. If $m/(4h)\equiv0\pmod{2}$, then $m/4\equiv0\pmod{2}$ and
$$
\tr\left(x_0^{2^h+1}\right)=\frac{m}{4h}\tr_{1}^{4h}\left(x_0^{2^h+1}\right)=0, \textrm{i.e., } (-1)^{\tr\left(x_0^{2^h+1}\right)}=(-1)^{\frac{m}{4}}.
$$
If $m/(4h)\equiv1\pmod{2}$, then $m/4\equiv h\pmod{2}$ and
\begin{align}\label{eq4}
\tr\left(x_0^{2^h+1}\right)&=\frac{m}{4h}\tr_{1}^{4h}\left(x_0^{2^h+1}\right)\nonumber\\
&=\tr_{1}^{4h}\left(x_0^{2^{h}+1}\right)\nonumber \\
&=\tr_{1}^{2h}\left(\tr_{2h}^{4h}\left(x_0^{2^h+1}\right)\right).
\end{align}
By the definition of the trace function and  $x_0$, we have
\begin{align}\label{eq5}
\tr_{1}^{2h}\left(\tr_{2h}^{4h}\left(x_0^{2^h+1}\right)\right)&=\tr_1^{2h}\left(x_0^{2^h+1}+x_0^{2^{2h}(2^h+1)}\right)\nonumber \\
&=\tr_{1}^{2h}\left(x_0^{2^h+1}+(1+x_0)^{2^h+1}\right)\nonumber \\
&=\tr_1^{2h}\left(x_0^{2^{h}}+x_0+1\right)\nonumber \\
&=\tr_{1}^{h}\left(\tr_{h}^{2h}\left(x_0^{2^h}+x_0+1\right)\right)\nonumber \\
&=\tr_{h/1}(1)\nonumber\\
&=h.
\end{align}
Combining \eqref{eq4} and \eqref{eq5}, we have
$$\tr\left(x_0^{2^h+1}\right)=h, \textrm{ i.e., }   (-1)^{\tr\left(x_0^{2^h+1}\right)}=(-1)^{\frac{m}{4}}.$$
Hence, we get
$$
S_{h}(1,1)=\left\{\begin{array}{ll}
                    0,  & \textrm{if $m/h$}\equiv2\pmod{4}, \\
                    -(-1)^{\frac{m}{4}}2^{e+h}, & \textrm{if $m/h$}\equiv0\pmod{4}.
                  \end{array}
                  \right.
$$
This completes the proof of this lemma.
\end{proof}

\section{Results and Proofs}
In this section, we will determine the parameters of the code $\C_{D_a}$ $(a\in \f_2)$ defined in \eqref{defcode},   and give the proofs of these parameters.

Define
$$
N_{a}=\left|\left\{x\in \mathbb{F}_{q}: \tr\left(x^{2^{h}+1}+x\right)=a\right\}\right|.
$$
By definition, the length $n_a$ of the code $\C_{D_a}$ satisfies
\begin{equation}\label{eq1.1}
n_a=N_a+a-1.
\end{equation}
 It can be easily checked that
\begin{align}\label{eq-weight1}
N_{a} &= 2^{-1}\sum_{x\in \mathbb{F}_{q}}\sum_{y \in \f_{2}}(-1)^{y\tr\left(x^{2^{h}+1}+x\right)-ya}
  \nonumber \\
 &=2^{-1}\sum_{x\in \mathbb{F}_{q}}\left(1+(-1)^{\tr\left(x^{2^{h}+1}+x\right)-a}\right)
\nonumber \\
&= 2^{m-1}
+ 2^{-1}\sum_{x\in \mathbb{F}_{q}}(-1)^{\tr\left(x^{2^{h}+1}+x\right)-a}  \nonumber\\
&= 2^{m-1} + 2^{-1}(-1)^{a}S_{h}(1,1)
\end{align}

Define
$$N(a,b)=\left|\left\{x\in \mathbb{F}_{q}: \tr\left(x^{2^{h}+1}+x\right)=a \textrm{ and } \tr(bx)=0\right\}\right|.$$
Denote $\wt(\mathbf{c}_b)$ the Hamming weight of the  codeword $
 \mathbf{c}_b$ with $b\in \mathbb{F}_{q}^{*}$
 of the code $\C_{D_{a}}$. It is easy to see that
\begin{equation} \label{eq-wt}
\wt(\mathbf{c}_b)=N_a-N(a,b).
\end{equation}
 For any $b\in \f_q^{*}$, by Lemma \ref{lem1}, we have
\begin{align}\label{eq-weight}
N(a,b) &= 2^{-2}\sum_{x\in \mathbb{F}_{q}}\left(\sum_{y \in \f_{2}}(-1)^{y\tr\left(x^{2^{h}+1}+x\right)-ya}\right)
\left(\sum_{z \in \f_{2}}(-1)^{z\tr(bx)}\right)  \nonumber \\
 &=2^{-2}\sum_{x\in \mathbb{F}_{q}}\left(1+(-1)^{\tr\left(x^{2^{h}+1}+x\right)-a}\right)
\left(1+(-1)^{\tr(bx)}\right)\nonumber \\
&= 2^{m-2}
+ 2^{-2}\sum_{x\in \mathbb{F}_{q}}(-1)^{\tr\left(x^{2^{h}+1}+x\right)-a} \nonumber
\quad +
2^{-2}\sum_{x\in \mathbb{F}_{q}}(-1)^{\tr\left(x^{2^{h}+1}+bx+x\right)-a} \nonumber\\
&= 2^{m-2} + 2^{-2}(-1)^{a}\left(S_{h}(1,1)+S_{h}(1,b+1)\right) \nonumber\\
\end{align}

\begin{table}[ht]
\centering
\caption{The weight distribution of the codes of Theorem \ref{theorem1}}\label{tal:weightdistribution1}
\begin{tabular}{|l|l|}
\hline
\textrm{Weight} $w$ \qquad& \textrm{Multiplicity} $A$   \\
\hline
0 \qquad&   1  \\
\hline
$2^{m-2}$ \qquad&  $2^{m-h-1}-1+\left(\frac{2}{m/h}\right)^{h}2^{\frac{m-h-2}{2}}$  \\
\hline
$2^{m-2}+(-1)^{a}\left(\frac{2}{m/h}\right)^{h}2^{\frac{m+h-2}{2}}$  \qquad& $2^{m-h-1}-\left(\frac{2}{m/h}\right)^{h}2^{\frac{m-h-2}{2}}$  \\
\hline
$2^{m-2}+(-1)^{a}\left(\frac{2}{m/h}\right)^{h}2^{\frac{m+h-4}{2}}$ \qquad&    $2^{m}-2^{m-h}$ \\
\hline
\end{tabular}
\end{table}

\begin{theorem}\label{theorem1}
Let $m/h$ be odd. For $a\in \f_2$, the code ${\C}_{D_{a}}$ defined in \eqref{defcode} is a
$\left[n_a, m\right]$ binary
linear code with the weight distribution in $\autoref{tal:weightdistribution1}$, where $n_a=2^{m-1}+(-1)^{a}\left(\frac{2}{m/h}\right)^{h}2^{\frac{m+h-2}{2}}+a-1$.
\end{theorem}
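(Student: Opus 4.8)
The plan is to compute the weight distribution directly from the exponential-sum machinery assembled in Section~2, following the template of equations \eqref{eq-weight1} and \eqref{eq-weight}. Since $m/h$ is odd, Lemma~\ref{lem3.1} gives $S_h(1,1)=\left(\frac{2}{m/h}\right)^h 2^{\frac{m+h}{2}}$ exactly, so \eqref{eq-weight1} pins down $N_a$ and hence, via \eqref{eq1.1}, the length $n_a=2^{m-1}+(-1)^a\left(\frac{2}{m/h}\right)^h 2^{\frac{m+h-2}{2}}+a-1$. The dimension is $m$ because the map $x\mapsto(\tr(xd_1),\dots,\tr(xd_{n_a}))$ is injective: if $\tr(xd)=0$ for all $d$ in the defining set then, since $D_a$ together with $0$ (when $a=0$) or with the needed adjustments spans $\f_q$ over $\f_2$, we get $x=0$; I would state this as the standard argument that the defining set is not contained in any hyperplane.

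Next I would evaluate $\wt(\mathbf{c}_b)=N_a-N(a,b)$ using \eqref{eq-wt} and \eqref{eq-weight}. Plugging in,
\[
\wt(\mathbf{c}_b)=2^{m-2}+\tfrac14(-1)^a\bigl(S_h(1,1)-S_h(1,b+1)\bigr),
\]
after the $S_h(1,1)$ terms partially cancel. Now the only quantity varying with $b\in\f_q^*$ is $S_h(1,b+1)$, whose value is controlled by Lemma~\ref{lem3}: it is $0$ when $\tr_h^m(b+1)\neq1$, and $\pm 2^{\frac{m+h}{2}}$ when $\tr_h^m(b+1)=1$. So there are at most three possible weights, matching the table. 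The bookkeeping step is to count how many $b\in\f_q^*$ fall into each case. The condition $\tr_h^m(b+1)\neq 1$, i.e.\ $\tr_h^m(b)\neq \tr_h^m(1)$; since $m/h$ is odd, $\tr_h^m(1)=1$, so this is $\tr_h^m(b)\neq 1$, which (including $b=0$) happens for exactly $2^{m-h}$ values of $b$ and hence $2^{m-h}-1$ nonzero values — this gives the weight $2^{m-2}+(-1)^a S_h(1,1)/4 = 2^{m-2}+(-1)^a\left(\frac{2}{m/h}\right)^h 2^{\frac{m+h-4}{2}}$ with multiplicity $2^m-2^{m-h}$... wait, I need to recount: the $b$ with $\tr_h^m(b+1)=1$ number $2^{m-h}$, so the vanishing case has $2^m-2^{m-h}$ nonzero $b$'s, yielding the third row of the table. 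For the remaining $2^{m-h}$ values of $b$ with $\tr_h^m(b+1)=1$, $S_h(1,b+1)=\varepsilon_b 2^{\frac{m+h}{2}}$ with $\varepsilon_b=\pm1$, producing weights $2^{m-2}+(-1)^a\left(\frac{2}{m/h}\right)^h 2^{\frac{m+h-4}{2}} \mp (-1)^a 2^{\frac{m+h-4}{2}}$, i.e.\ the first and second rows.

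The main obstacle is determining the split between $S_h(1,b+1)=+2^{\frac{m+h}{2}}$ and $S_h(1,b+1)=-2^{\frac{m+h}{2}}$ among those $2^{m-h}$ values of $b$; Lemma~\ref{lem3} only records the magnitude in the nonzero case. I would resolve this by a second-moment (Parseval-type) computation: summing $\sum_{b\in\f_q}S_h(1,b)$ and $\sum_{b\in\f_q}S_h(1,b)^2$ over all $b$, using Lemma~\ref{lem2} and orthogonality (Lemma~\ref{lem1}) to evaluate these as closed-form expressions in $q$ and $2^h$. Knowing the total count $2^{m-h}$ of nonzero terms, their common magnitude $2^{\frac{m+h}{2}}$, and the value of $\sum_b S_h(1,b)$ lets me solve a linear system for the number $N_+$ of plus-signs and $N_-$ of minus-signs, which will come out to $N_\pm = 2^{m-h-1}\mp\left(\frac{2}{m/h}\right)^h 2^{\frac{m-h-2}{2}}$ (the sign aligning with Lemma~\ref{lem3.1} via the $b=0$-adjacent term). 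Substituting these multiplicities, reconciling the $a$-dependence of the weights, and checking $\sum_w A_w = 2^m$ and the first-moment identity $\sum_w w A_w = n_a 2^{m-1}$ as sanity checks completes the proof.
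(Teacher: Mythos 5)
Your proposal is correct and, up to the last step, follows the same route as the paper: the length from $S_h(1,1)$ via Lemma \ref{lem3.1}, the weight formula $\wt(\mathbf{c}_b)=2^{m-2}+\tfrac{1}{4}(-1)^a\bigl(S_h(1,1)-S_h(1,b+1)\bigr)$, the three-way case split from Lemmas \ref{lem2} and \ref{lem3}, and the count $A_{\omega_3}=2^m-2^{m-h}$ for the $b$ with $S_h(1,b+1)=0$. Where you part ways is in pinning down the remaining two multiplicities: the paper plugs $n_0$ into the first two Pless power moments and solves a $2\times2$ linear system, whereas you determine the sign distribution of the nonzero Weil sums directly from $\sum_{b\in\f_q}S_h(1,b)=2^m$ (orthogonality leaves only $x=0$) and $\sum_{b\in\f_q}S_h(1,b)^2=2^{2m}$, then discard the term $b+1=1$, whose sign $\left(\frac{2}{m/h}\right)^h$ is supplied by Lemma \ref{lem3.1}. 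The two computations encode the same information (the first power moment is exactly $\sum_b\wt(\mathbf{c}_b)$, i.e.\ $\sum_b S_h(1,b+1)$ in disguise), but your version is self-contained at the level of character sums; the second-moment sum is not strictly needed, since Lemma \ref{lem3} together with the surjectivity of $\tr_h^m$ already gives $2^{m-h}$ nonzero terms. Two small corrections for the write-up: among $b\in\f_q^{*}$ there are $2^{m-h}-1$ (not $2^{m-h}$) values with $\tr_h^m(b+1)=1$, because $b=0$ meets the condition and must be removed; and the raw split is $N_\pm=2^{m-h-1}\pm2^{\frac{m-h-2}{2}}$ with no Jacobi-symbol factor --- the symbol enters only when you translate signs into the weights $\omega_1,\omega_2$ and subtract the $b=0$ contribution. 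Neither issue affects the validity of the method.
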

\begin{proof}
 We only prove the conclusions on the parameters of the code $\C_{D_0}$, since the conclusions  on $\C_{D_1}$ can be similarly proved.

By Lemma \ref{lemma9}, \eqref{eq1.1} and \eqref{eq-weight1}, we obtain
\begin{equation}\label{eq-n0}
n_0=N_0-1=2^{m-1}+\left(\frac{2}{m/h}\right)^{h}2^{\frac{m+h-2}{2}}-1.
\end{equation}

For $b\in\f_{q}\setminus \{0,1\}$, by Lemma \ref{lem3}, we have
$$
S_{h}(1,b+1)=\left\{\begin{array}{ll}
                      0, & \textrm{if } \tr(b+1)\neq1, \\
                      \pm 2^{\frac{m+h}{2}}, & \textrm{if } \tr(b+1)=1.
                    \end{array}
                    \right.
$$
For $b=1$, by Lemma \ref{lem2}, $S_{h}(1,b+1)=S_{h}(1,0)=0$. Then for $b\in \f_q^{*}$, we get
 $$S_h(1,b+1)\in \{0,\pm2^{\frac{m+h}{2}}\}.$$
It follows from \eqref{eq-weight} that
$$
N(a,b)\in\left\{\mu, \mu\pm 2^{\frac{m+h}{2}}\right\},
$$
where $\mu=2^{m-2}+2^{-2}\left(\frac{2}{m/h}\right)^{h}2^{\frac{m+h}{2}}$.
Hence, the weight $\wt(\mathbf{c}_b)$ of the codeword $\mathbf{c}_b$ satisfies
\begin{align*}
\wt(\mathbf{c}_b)&=N_0-N(0,b)\\
&\in \left\{2^{m-2}, 2^{m-2}+\left(\frac{2}{m/h}\right)^{h}2^{\frac{m+h-2}{2}}, 2^{m-2}+\left(\frac{2}{m/h}\right)^{h}2^{\frac{m+h-4}{2}}\right\},
\end{align*}
and the code $\C_{D_0}$ has all the three weights in the above set. The minimum distance of the dual code $\C_{D_0}^{\perp}$ of $\C_{D_0}$ is at least $3$, since $0\not\in D_0$ and the elements in $D_0$ are pairwise distinct.
Define
\begin{align*}
\omega_1&=2^{m-2},\\
\omega_2&=2^{m-2}+\left(\frac{2}{m/h}\right)^h2^{\frac{m+h-2}{2}},\\
\omega_3&=2^{m-2}+\left(\frac{2}{m/h}\right)^h2^{\frac{m+h-4}{2}}.
\end{align*}
Next we will calculate the multiplicity $A_{\omega_i}$ of $\omega_i$.
By the computation above, the number  of $b\in \f_{q}^{*}$ such that  $\wt(\mathbf{c}_b)=\omega_3$  is equal to
$$
1+\left|\left\{b\in \f_{q}\setminus \{0,1\}: \tr_{h}^{m}(b+1)\neq1\right\}\right|.
$$
It is known that \cite{LN97}
$$
\left|\left\{b\in \f_{q}\setminus \{0,1\}: \tr_{h}^{m}(b+1)=1\right\}\right|=2^{m-h}-1.
$$
Hence, $A_{\omega_3}=2^{m}-2^{m-h}$. By the first two Pless Power Moment (\cite{HP03}, P. 260), we get the following system of equations
\begin{align}\label{eq-powe}
\left\{\begin{array}{ll}
         A_{\omega1}+A_{ \omega2}+A_{ \omega3}=2^{m}-1, \\
         \omega_1A_{ \omega1} + \omega_2A_{\omega_2}+ \omega_3A_{ \omega_3}=n_02^{m-1},
       \end{array}
       \right.
\end{align}
where $n_0$ is given by \eqref{eq-n0}. Solving the system of equations in \eqref{eq-powe}  yields the weight distribution of $\C_{D_0}$. The dimension of $\C_{D_0}$ is $m$, as $\wt(\mathbf{c}_b)>0$ for $b\in \f_q^{*}$. The proof of the parameters of $\C_{D_0}$ is completed.
\end{proof}

\begin{table}[ht]
\centering
\caption{The weight distribution of the codes of Theorem \ref{theorem5}}\label{tal:weightdistribution4}
\begin{tabular}{|l|l|}
\hline
\textrm{Weight} $w$ \qquad& \textrm{Multiplicity} $A$   \\
\hline
0 \qquad&   1  \\
\hline
$2^{m-2}$ \qquad&  $2^{m}-2^{m-2h}-1$  \\
\hline
$2^{m-2}-2^{e+h-2}$  \qquad& $2^{m-2h-1}+(-1)^{a}2^{e-h-1}$  \\
\hline
$2^{m-2}+2^{e+h-2}$ \qquad \qquad \qquad \qquad &    $2^{m-2h-1}-(-1)^{a}2^{e-h-1}$  \qquad \qquad\\
\hline
\end{tabular}
\end{table}

\begin{theorem} \label{theorem5}
Let $2 < m/h\equiv2\pmod4.$ Then the code $ {\C}_{D_{a}}$
defined in \eqref{defcode} is a $[2^{m-1}+a-1, m]$ binary linear code with the weight distribution in $\autoref{tal:weightdistribution4}$.
\end{theorem}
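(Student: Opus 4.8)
The plan is to follow the template of the proof of Theorem~\ref{theorem1}, replacing the odd-$m/h$ character sum evaluations by the even-$m/h$ formulas of Lemmas~\ref{lem4},~\ref{lem5} and Lemma~\ref{lemma9}. Since $m/h\equiv 2\pmod{4}$, Lemma~\ref{lemma9} gives $S_{h}(1,1)=0$; substituting this into \eqref{eq1.1} and \eqref{eq-weight1} yields $N_{a}=2^{m-1}$, hence $n_{a}=2^{m-1}+a-1$. For $b\in\f_{q}^{*}$, equations \eqref{eq-wt} and \eqref{eq-weight} then collapse to $\wt(\mathbf{c}_{b})=2^{m-2}-2^{-2}(-1)^{a}S_{h}(1,b+1)$, so everything reduces to evaluating $S_{h}(1,c)$ as $c=b+1$ ranges over $\f_{q}\setminus\{1\}$.

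For that I would invoke Lemma~\ref{lem5}. Since $1=g^{0\cdot(2^{h}+1)}$ is a $(2^{h}+1)$-th power we are in case~(2) with $f(x)=x^{2^{2h}}+x$; and since $m/h$ is even, $\tr_{h}^{m}(1)=0$, which places us in the first subcase, giving $S_{h}(1,c)=-(-1)^{e/h}2^{e+h}\chi_{1}\!\left(x_{0}^{2^{h}+1}\right)$ whenever $x^{2^{2h}}+x=c^{2^{h}}$ is solvable (with solution $x_{0}$), and $S_{h}(1,c)=0$ otherwise; the boundary value $c=0$ (i.e.\ $b=1$) is covered by Lemma~\ref{lem4} and yields the same formula. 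Because $m/h\equiv2\pmod{4}$ forces $e/h=(m/h)/2$ to be odd, $(-1)^{e/h}=-1$, so $S_{h}(1,c)\in\{0,\pm 2^{e+h}\}$; substituting back produces exactly the three candidate weights $\omega_{1}=2^{m-2}$, $\omega_{2}=2^{m-2}-2^{e+h-2}$, $\omega_{3}=2^{m-2}+2^{e+h-2}$ of \autoref{tal:weightdistribution4}.

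Next come the multiplicities. The additive map $L(x)=x^{2^{2h}}+x$ on $\f_{q}$ has kernel $\{x:x^{2^{2h}}=x\}=\f_{2^{2h}}$ (note $2h\mid m$ because $m/h$ is even), so $|\mathrm{Im}\,L|=2^{m-2h}$; since the Frobenius $c\mapsto c^{2^{h}}$ permutes $\f_{q}$, the number of $c\in\f_{q}$ with $c^{2^{h}}\in\mathrm{Im}\,L$ is again $2^{m-2h}$, and Lemma~\ref{lemma8} (in the case $m/h\equiv2\pmod{4}$) shows $1\notin\mathrm{Im}\,L$, so none of these $c$ equals $1$; equivalently, $S_{h}(1,b+1)\neq 0$ for exactly $2^{m-2h}$ values of $b\in\f_{q}^{*}$, whence $A_{\omega_{1}}=(2^{m}-1)-2^{m-2h}=2^{m}-2^{m-2h}-1$. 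As in Theorem~\ref{theorem1}, $0\notin D_{a}$ ensures that no coordinate of $\C_{D_{a}}$ is identically zero, so the first two Pless power moments read $A_{\omega_{1}}+A_{\omega_{2}}+A_{\omega_{3}}=2^{m}-1$ and $\omega_{1}A_{\omega_{1}}+\omega_{2}A_{\omega_{2}}+\omega_{3}A_{\omega_{3}}=n_{a}2^{m-1}$; solving this $2\times2$ linear system for $A_{\omega_{2}},A_{\omega_{3}}$ with the values of $A_{\omega_{1}}$ and $n_{a}$ above gives $A_{\omega_{2}}=2^{m-2h-1}+(-1)^{a}2^{e-h-1}$ and $A_{\omega_{3}}=2^{m-2h-1}-(-1)^{a}2^{e-h-1}$, matching the table. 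Finally $2<m/h$ forces $e>h$, so $\omega_{2}>0$ and in fact $\wt(\mathbf{c}_{b})>0$ for every $b\in\f_{q}^{*}$, whence $\dim\C_{D_{a}}=m$.

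The step I expect to require the most care is the counting in the third paragraph: one must correctly identify $\ker L=\f_{2^{2h}}$ (this is where $2h\mid m$, hence the evenness of $m/h$, is used), transport the count through the Frobenius twist $c\mapsto c^{2^{h}}$ without changing its size, and apply Lemma~\ref{lemma8} precisely so as to exclude $c=1$ (equivalently $b=0$) — an off-by-one here would corrupt $A_{\omega_{1}}$ and, through the power moments, the whole weight distribution. Everything else is the same bookkeeping as in Theorem~\ref{theorem1}, the only genuinely new ingredient being the sign computation $(-1)^{e/h}=-1$ that distinguishes the $m/h\equiv2\pmod{4}$ case from the $m/h\equiv0\pmod{4}$ regime.
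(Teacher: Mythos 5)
Your proof is correct, and its overall skeleton (compute $n_a$ from $S_h(1,1)=0$, reduce $\wt(\mathbf{c}_b)$ to $S_h(1,b+1)$, identify the three weights via Lemmas \ref{lem4} and \ref{lem5}, then solve for multiplicities) matches the paper's. The one genuine divergence is how the multiplicities are pinned down. The paper uses the first \emph{three} Pless power moments, which gives a $3\times 3$ system in $A_{\omega_1},A_{\omega_2},A_{\omega_3}$ and requires knowing $d^\perp\geq 3$ so that the second-moment identity takes its simple form. You instead compute $A_{\omega_1}$ directly: $S_h(1,c)\neq 0$ exactly when $c^{2^h}$ lies in the image of the linearized map $L(x)=x^{2^{2h}}+x$, whose kernel is $\f_{2^{2h}}$ (here is where $2h\mid m$ is used), so the image has size $2^{m-2h}$; Lemma \ref{lemma8} rules out $c=1$, so exactly $2^{m-2h}$ of the $b\in\f_q^*$ give a nonzero sum, and only the first two power moments are then needed. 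This is in fact the same strategy the paper uses for Theorem \ref{theorem1} (where $A_{\omega_3}$ is counted directly via the trace condition), so your proof is arguably more uniform across the two theorems; it also yields slightly more information, namely \emph{which} $b$ produce the weight $2^{m-2}$, and it serves as an independent check on the moment computation. The trade-off is that you must get the kernel/image count and the exclusion of $c=1$ exactly right, whereas the paper's route replaces that counting argument with one additional (purely mechanical) moment equation. Your sign bookkeeping $(-1)^{e/h}=-1$ and the final $2\times 2$ solve both check out against Table \ref{tal:weightdistribution4}.
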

\begin{proof}
Similarly, we only prove the results of $\C_{D_0}$.

It follows from Lemma \ref{lemma9}, \eqref{eq1.1} and \eqref{eq-weight1} that the length $n_0$ of
$\C_{D_0}$ is given by
$$
n_0=N_0-1=2^{m-1}-1.
$$

Note that $\tr_{h}^{m}(1)=0$, as $m/h$ is even.  For $b\in\f_{q}^{*}\setminus \{1\}$, by Lemma \ref{lem5}, we know   $S_{h}(1,b+1)\in \{0,\pm2^{e+h}\}$. If $b=1$, by Lemma \ref{lem4} $S_{h}(1,0)=-2^{e+h}$. Hence, for $b\in \f_{q}^{*}$, we have
$$
S_{h}(1,b+1)\in \{0, \pm2^{e+h}\}.
$$
It then follows from Lemma \ref{lemma9} and \eqref{eq-weight} that
$$
N(0,b)\in \{2^{m-2}, 2^{m-2}\pm2^{e+h-2}\}
$$
 for $b\in \f_{q}^{*}$.
Hence,
$$
\wt(\mathbf{c}_b)\in \{2^{m-2}, 2^{m-2}\pm2^{e+h-2}\},
$$
for all $b\in \f_{q}^{*}$.

As stated in the proof of Theorem \ref{theorem1}, the minimum distance $d^{\perp}$ of the dual code $\C_{D_0}^{\perp}$ of $\C_{D_0}$ satisfies $d^{\perp}\geq3$.

Let $\omega_1=2^{m-2}$, $\omega_2=2^{m-2}+2^{e+h-2}$ and $\omega_3=2^{m-2}-2^{e+h-2}$.  The first three Pless Power Moments (\cite{HP03}, P. 260) give the following system of equations
$$
\left\{\begin{array}{ll}
           A_{\omega1}+A_{ \omega2}+A_{ \omega3}=2^{m}-1, \\
         \omega_1A_{ \omega1} + \omega_2A_{\omega_2}+ \omega_3A_{ \omega_3}=2^{2m-2},\\
         \omega_1^{2}A_{ \omega1} + \omega_2^{2}A_{\omega_2}+ \omega_3^{2}A_{ \omega_3}=\left(2^{m-1}+1\right)2^{2m-3}.
       \end{array}
       \right.
$$
Solving the above system of equations, we get the weight distribution of \autoref{tal:weightdistribution4}.
\end{proof}

\noindent{\textbf{Remark}:}
\begin{enumerate}
\item   If $m/h=2, $ then the code ${\C}_{D_{0}}$
defined in \eqref{defcode} is a $[2^{m-1}-1, m-1,2^{m-2}]$ constant  binary linear code.
And the code ${\C}_{D_{0}}\backslash \{0\}$ is an optimal constant-weight code with
Johnson Bound I. 

\item If $m/h=2, $ then the code ${\C}_{D_{1}}$ is a $[2^{m-1}, m]$ binary linear code with the weight distribution in  $\autoref{tal:weightdistribution3}$.  
\end{enumerate}

For the case $m/h=2$, we omit  the proofs of the parameters of $\C_{D_a}$ $(a\in \f_2)$ and the details are left to the reader.
\begin{table}
\centering
\caption{The weight distribution of  $\C_{D_1}$, when $m/h=2$}\label{tal:weightdistribution3}
\begin{tabular}{|l|l|}
\hline
\textrm{Weight} $w$ \qquad& \textrm{Multiplicity} $A$   \\
\hline
0 \qquad&   1  \\
\hline
$2^{m-1}$ \qquad&  $1$  \\
\hline
$2^{m-2}$ \qquad \qquad \qquad \qquad \qquad   & $2^{m}-2$  \qquad \qquad \qquad \qquad \ \\
\hline
\end{tabular}
\end{table}

\begin{theorem} \label{theorem7}
Let $m/h\equiv0\pmod4.$ Then the code $ {\C}_{D_{a}}$
defined in \eqref{defcode} is a $[2^{m-1}-(-1)^{\frac{m+4a}{4}}2^{e+h-1}+a-1, m]$
binary linear code with the weight distribution in $\autoref{tal:weightdistribution6}$.
\end{theorem}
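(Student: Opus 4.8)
The plan is to repeat, almost verbatim, the scheme used for Theorems \ref{theorem1} and \ref{theorem5}, feeding in the exponential‑sum data appropriate to the congruence $m/h\equiv0\pmod4$, and to treat only $\C_{D_{0}}$ in detail (the case $a=1$ being obtained by flipping the sign of the $S_{h}(1,1)$‑term, exactly as in the earlier proofs). \emph{Step 1 (length).} Since $m/h$ is even and $m/h\equiv0\pmod4$, Lemma \ref{lemma9} gives $S_{h}(1,1)=-(-1)^{m/4}2^{e+h}$; combining this with \eqref{eq1.1} and \eqref{eq-weight1} produces $n_{a}=N_{a}+a-1=2^{m-1}-(-1)^{(m+4a)/4}2^{e+h-1}+a-1$, which is the stated length.

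\emph{Step 2 (the three weights).} For $b\in\f_{q}^{*}$ I would evaluate $S_{h}(1,b+1)$. Because $m/h$ is even, $\tr_{h}^{m}(1)=0$; because $4\mid m/h$, the exponent $e/h=m/(2h)$ is even, so the factor $(-1)^{e/h}$ appearing in Lemma \ref{lem5} equals $1$. Hence Lemma \ref{lem5}, applied with $a=1$ (which is a $(2^{h}+1)$‑st power, namely $g^{0}$, so we are in its second case), shows that $S_{h}(1,b+1)=0$ unless the $\f_{2}$‑linearized equation $L(x):=x^{2^{2h}}+x=(b+1)^{2^{h}}$ is solvable, in which case $S_{h}(1,b+1)=-2^{e+h}\chi_{1}(x_{0}^{2^{h}+1})\in\{2^{e+h},-2^{e+h}\}$; and for $b=1$, Lemma \ref{lem4} gives $S_{h}(1,0)=-2^{e+h}$. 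Substituting these values together with $S_{h}(1,1)=-(-1)^{m/4}2^{e+h}$ into \eqref{eq-weight} and then \eqref{eq-wt}, a short computation shows that every nonzero codeword of $\C_{D_{0}}$ has weight in $\{\,2^{m-2},\ 2^{m-2}-(-1)^{m/4}2^{e+h-2},\ 2^{m-2}-(-1)^{m/4}2^{e+h-1}\,\}$; these three numbers are pairwise distinct (here $e+h\geq3$), so $\C_{D_{0}}$ is a three‑weight code, and, as in the earlier proofs, the dual distance $d^{\perp}\geq3$ because $0\notin D_{0}$ and the entries of $D_{0}$ are distinct.

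\emph{Step 3 (multiplicities).} I would pin down one multiplicity by a direct count. The weight $2^{m-2}-(-1)^{m/4}2^{e+h-2}$ is exactly the weight attached to $S_{h}(1,b+1)=0$, which for $b\neq1$ means $(b+1)^{2^{h}}\notin\mathrm{Im}(L)$. Since $2h\mid m$, the linear map $L$ has kernel $\f_{2^{2h}}$, hence $|\mathrm{Im}(L)|=2^{m-2h}$; moreover $0\in\mathrm{Im}(L)$ trivially and $1\in\mathrm{Im}(L)$ by Lemma \ref{lemma8} (here using $4\mid m/h$). As $b\mapsto(b+1)^{2^{h}}$ is a bijection of $\f_{q}$ carrying $b=1$ to $0$ and $b=0$ to $1$, precisely $2^{m-2h}-2$ of the $b\in\f_{q}\setminus\{0,1\}$ satisfy $(b+1)^{2^{h}}\in\mathrm{Im}(L)$; adding that $S_{h}(1,0)=-2^{e+h}\neq0$ for $b=1$, we find $S_{h}(1,b+1)=0$ for exactly $2^{m}-2^{m-2h}$ values of $b\in\f_{q}^{*}$, so that weight has multiplicity $2^{m}-2^{m-2h}$. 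The remaining two multiplicities then follow from the first two Pless power moments (\cite{HP03}, p.~260), namely $\sum_{i}A_{\omega_{i}}=2^{m}-1$ and $\sum_{i}\omega_{i}A_{\omega_{i}}=n_{0}2^{m-1}$, which become a $2\times2$ linear system once the computed multiplicity is substituted. Solving it gives \autoref{tal:weightdistribution6}, and since $\wt(\mathbf{c}_{b})>0$ for all $b\in\f_{q}^{*}$ the dimension is $m$; the case $a=1$ is identical with $(-1)^{m/4}$ replaced by $-(-1)^{m/4}=(-1)^{(m+4)/4}$ throughout.

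The step I expect to demand the most care is the bookkeeping in Step 2: tracking the sign $(-1)^{m/4}$ (which, note, is \emph{not} the same as $(-1)^{e/h}$, the latter being $1$ here) through $S_{h}(1,1)$, $N(0,b)$ and $\wt(\mathbf{c}_{b})$, and isolating the exceptional argument $b=1$ (as well as the exceptional image point $1\in\mathrm{Im}(L)$, which is where Lemma \ref{lemma8} enters) so that the three weights and their multiplicities come out in the normalized shape of \autoref{tal:weightdistribution6}. If one would rather not carry out the direct count of Step 3, the alternative used in Theorem \ref{theorem5}---invoking the first three Pless power moments and solving a $3\times3$ system---works equally well.
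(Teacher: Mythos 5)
Your proposal is correct and is exactly the argument the paper intends but omits (the paper's ``proof'' of Theorem \ref{theorem7} is a one-line deferral to the method of Theorem \ref{theorem5}): Lemma \ref{lemma9} for the length, Lemmas \ref{lem4} and \ref{lem5} for the values of $S_h(1,b+1)$, and power moments for the multiplicities, with your sign bookkeeping and the computed values all agreeing with \autoref{tal:weightdistribution6}. Your Step 3 follows the Theorem \ref{theorem1} variant (a direct count of the $S_h(1,b+1)=0$ multiplicity via the image of $L$, plus two power moments) rather than the three-moment route of Theorem \ref{theorem5}, but as you note both work, so this is an immaterial difference.
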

\begin{proof}
The proof of this theorem is
analogous to that in Theorem \ref{theorem5} and will not be included here.
\end{proof}
\begin{table}
\centering
\caption{The weight distribution of the codes of Theorem \ref{theorem7}}\label{tal:weightdistribution6}
\begin{tabular}{|l|l|}
\hline
\textrm{Weight} $w$ \qquad& \textrm{Multiplicity} $A$   \\
\hline
0 \qquad&   1  \\
\hline
$2^{m-2}-(-1)^{\frac{m+4a}{4}}2^{e+h-2}$ \qquad&  $2^{m}-2^{m-2h}$  \\
\hline
$2^{m-2}$  \qquad& $2^{m-2h-1}-(-1)^{\frac{m}{4}}2^{e-h-1}-1$  \\
\hline
$2^{m-2}-(-1)^{\frac{m+4a}{4}}2^{e+h-1}$ \qquad&    $2^{m-2h-1}+(-1)^{\frac{m}{4}}2^{e-h-1}$ \\
\hline
\end{tabular}
\end{table}

\begin{example}
Let $(m,h)=(9,3)$. For $a=0$, the code $\C_{D_0}$ has parameters $[223,9,96]$ and weight distribution
$$
1+36x^{96}+448x^{112}+27x^{128};
$$
for $a=1$, the code $\C_{D_1}$ has parameters $[288,9,128]$ and weight distribution
$$
1+27x^{128}+448x^{144}+36x^{160}.
$$
\end{example}
\begin{example}
Let $(m,h)=(12,2)$. For $a=0$, the code $\C_{D_0}$ has parameters $[2047,12,960]$ and weight distribution
$$
1+136x^{960}+3839x^{1024}+120x^{1088};
$$
for $a=1$, the code $\C_{D_1}$ has parameters $[2048,12,960]$ and weight distribution
$$
1+120x^{960}+3839x^{1024}+136x^{1088}.
$$
\end{example}
\begin{example}
Let $(m,h)=(8,2)$. For $a=0$, the code $\C_{D_0}$ has parameters $[96,832]$ and weight distribution
$$
1+10x^{32}+240x^{48}+5x^{64};
$$
for  $a=1$, the code $\C_{D_1}$ has parameters $[160,8,64]$ and weight distribution
$$
1+5x^{64}+240x^{80}+10x^{96}.
$$
\end{example}

\noindent{References}

\end{document}